\documentclass[11pt]{article}

\usepackage[letterpaper,margin=1in]{geometry}
\usepackage[T1]{fontenc}
\usepackage[utf8]{inputenc}
\usepackage{palatino} % Palatino-like text, close to the cited SOSA 2026 paper.
\usepackage{microtype}

\usepackage{amsmath,amssymb,amsthm,mathtools}
\usepackage{aliascnt}
\usepackage{enumitem}
\usepackage{booktabs}
\usepackage{tabularx}
\usepackage{graphicx}
\usepackage{xcolor}
\usepackage{xspace}

\setlist[itemize]{topsep=0.5em,itemsep=0.35em,parsep=0pt}
\setlist[enumerate]{topsep=0.5em,itemsep=0.35em,parsep=0pt}

\usepackage[most]{tcolorbox}
\newtcolorbox{algorithmBox}[1]{
  enhanced,
  breakable,
  colback=gray!4,
  colframe=black!60,
  boxrule=0.6pt,
  arc=2pt,
  left=8pt,
  right=8pt,
  top=6pt,
  bottom=6pt,
  title={#1},
  fonttitle=\bfseries,
  coltitle=black,
  attach boxed title to top left={
    xshift=8pt,
    yshift*=-\tcboxedtitleheight/2
  },
  boxed title style={
    colback=white,
    colframe=black!60,
    boxrule=0.6pt,
    arc=2pt
  }
}
\definecolor{SOSARefRed}{RGB}{175,45,35}
\definecolor{SOSACiteGreen}{RGB}{40,120,55}
\definecolor{SOSAUrlBlue}{RGB}{35,85,145}
\usepackage[
  colorlinks=true,
  linkcolor=SOSARefRed,
  citecolor=SOSACiteGreen,
  urlcolor=SOSAUrlBlue,
  pdfborder={0 0 0}
]{hyperref}
\theoremstyle{plain}
\newtheorem{theorem}{Theorem}[section]

\newaliascnt{lemma}{theorem}
\newtheorem{lemma}[lemma]{Lemma}
\aliascntresetthe{lemma}
\newaliascnt{proposition}{theorem}

\aliascntresetthe{proposition}
\newaliascnt{corollary}{theorem}
\newtheorem{corollary}[corollary]{Corollary}
\aliascntresetthe{corollary}
\newaliascnt{claim}{theorem}

\aliascntresetthe{claim}
\newaliascnt{conjecture}{theorem}
\newtheorem{conjecture}[conjecture]{Conjecture}
\aliascntresetthe{conjecture}

\theoremstyle{definition}
\newaliascnt{definition}{theorem}
\newtheorem{definition}[definition]{Definition}
\aliascntresetthe{definition}
\newaliascnt{problem}{theorem}

\aliascntresetthe{problem}
\newaliascnt{example}{theorem}

\aliascntresetthe{example}

\theoremstyle{remark}
\newaliascnt{remark}{theorem}

\aliascntresetthe{remark}
\newaliascnt{observation}{theorem}

\aliascntresetthe{observation}

\usepackage[nameinlink,noabbrev]{cleveref}

\crefname{theorem}{theorem}{theorems}
\crefname{lemma}{lemma}{lemmas}
\crefname{proposition}{proposition}{propositions}
\crefname{corollary}{corollary}{corollaries}
\crefname{definition}{definition}{definitions}
\crefname{problem}{problem}{problems}
\crefname{figure}{Figure}{Figures}
\crefname{table}{Table}{Tables}
\crefname{section}{Section}{Sections}

\newcommand{\OPT}{\mathrm{OPT}}

\newcommand{\bigO}{\ensuremath{\mathcal{O}}\xspace}
\newcommand{\lpccc}{\text{LP}_{CCC}}
\def\Exp{\mathbb{E}}
\newcommand{\tO}{\ensuremath{\widetilde{\mathcal{O}}}\xspace}

\hypersetup{
  pdftitle={CKR Partitions and Lower Bounds for Constrained
Correlation Clustering and Variants},
  pdfauthor={Florian Adriaens}
}

\begin{document}
\pagenumbering{gobble}

% =============================================================================
% Dedicated title page
% SOSA27 explicitly asks for author information. This differs from SOSA26,
% which used lightweight double-blind reviewing at the submission stage.
% =============================================================================
\begin{titlepage}
  \centering
  \vspace*{1.25cm}

  {\LARGE CKR Partitions and Lower Bounds for Constrained Correlation Clustering and Variants\par}

  \vspace{1.25cm}

  % This 2-by-2 author grid resembles the cited SOSA 2026 paper.
  % Delete unused cells or add rows as needed. For one or two authors, keep
  % only the first row. Include enough information to identify every author.
  \begin{tabularx}{0.92\textwidth}{@{}>{\centering\arraybackslash}X@{}}
    {\large Florian Adriaens}\\
    Department of Computer Science \\
    University of Helsinki \\
    {florian.adriaens@helsinki.fi}
  \end{tabularx}

  \vspace{1.15cm}

  \begin{minipage}{0.86\textwidth}
    \small
    \begin{center}
      \textbf{Abstract}
    \end{center}
By using a simple textbook reduction from vertex cover, we show that the following three problems are all UG-hard to approximate with constant-factor smaller than two; \emph{ minimum weakness strong triadic closure}, \emph{cluster deletion} and \emph{constrained correlation clustering}.\footnote{Independent and concurrent of our work, both Cao and Xu~\cite{cao2026cluster} and Azizeddin et al.~\cite{azizeddin2026constrained} proved the same hardness results for CD and CCC, using the same reduction.}
Additionally, we analyze the well-known low-diameter decomposition by Calinescu, Karloff and Rabani \cite{calinescu2005approximation} applied to the standard LP relaxation semi-metric for constrained correlation clustering. 
As opposed to traditional pivot-based approaches, a CKR partition elegantly handles must-link and cannot-link constraints.
It guarantees a 3-approximation in expectation, which matches the approximation ratio by van Zuylen and Williamson \cite{van2009deterministic}.
We conjecture that it in fact achieves a strictly better than 3-approximation, yet this remains an open problem.
  \end{minipage}

  \vfill

  % Optional funding/support notes, visually similar to the cited paper.
  % Remove this entire block when it is not needed.
  %\begin{minipage}{0.82\textwidth}
  %  \footnotesize
  %  \rule{0.36\textwidth}{0.4pt}\par
  %  \textsuperscript{*}Supported in part by Grant A.\par
  %  \textsuperscript{\ensuremath{\dagger}}Supported in part by Grant B.\par
  % \textsuperscript{\ensuremath{\ddagger}}Supported in part by Grant C.
  %\end{minipage}
\end{titlepage}

% Start the main paper at page 1, as in the cited SOSA 2026 paper.
\pagenumbering{arabic}

\section{Introduction}
\label{sec:intro}

% The SOSA27 call specifically asks the beginning of the paper to contain:
% (i) a clear problem description;
% (ii) a survey of prior work, with a candid assessment of simplicity/elegance;
% (iii) a discussion of the paper's contributions.

In a signed graph, each edge has a positive $(+)$ or negative $(-)$ label. 
This paper is concerned with finding approximate solutions to several graph modification and clustering problems in signed graphs.
The problems and their known approximation results are discussed below.
In Section~\ref{sec:contr} we summarize our contributions.
The input of all these problems are assumed to be complete signed graphs, unless mentioned otherwise.\footnote{A complete signed graph is the same as an unsigned graph, but sometimes it is more convenient to describe problems and objects in terms of negative edges instead of non-edges. We will switch between the two notions.}

\paragraph{$(i)$ Constrained Correlation Clustering.}
Correlation Clustering (CC) is a fundamental approach for clustering signed graphs, pioneered by Bansal, Blum and Chawla~\cite{bansal2004correlation}. This unsupervised learning task has found widespread applications across a variety of different fields, see the recent survey \cite{bonchi2022correlation} for an in-depth discussion.
In short, CC seeks to partition a graph's vertices into clusters, with the goal of minimizing the total number of \emph{mistakes}. Here, a partition makes a mistake if $(a)$ the endpoints of a $+$ edge are part of different clusters, or $(b)$ the endpoints of a $-$ edge belong to the same cluster.
There is no restriction on the number of clusters a partition can have.

Constrained Correlation Clustering (CCC) is a generalization of CC \cite{van2009deterministic}.
It has two additional inputs: a set of must-link pairs (these pairs need to be intra-cluster) and a set of cannot-link pairs (these pairs need to be inter-cluster).
Similarly, the goal is to minimize total mistakes, but a partition is feasible only if it does not violate any of link constraints.
This naturally captures a semi-supervised setting, where, for example, an expert guides the output by strictly enforcing hard constraints \cite{davidson2007complexity,wagstaff2000clustering,berg2017cost,guo2024efficient}.

The known approximation lower bounds for CCC are those that carry over from CC; APX-hardness \cite{CHARIKAR2005360}, NP-hardness (under randomized reductions) for constant-factor better than $\frac{24}{23}$ \cite{cao2024understanding}, and NP-hardness for constant-factor better than $\frac{2137}{2136}$ \cite{adriaenssimple}.

For a very long time, the best known approximation ratio for CCC was 3, stemming from a pivot-based algorithm by the same authors who introduced the problem \cite{van2009deterministic}. 
Only very recently, the work of \cite{azizeddin2026constrained} showed that there exists a $\frac{16}{7}$ approximation based on rounding the canonical LP relaxation, and even slightly better if one instead uses a Sherali-Adams relaxation.

This is in sharp contrast to (unconstrained) CC, for which a $5/2$ approximation was given by \cite{ailon2008aggregating}, and in recent years has seen an impressive line of work that achieves even better-than-two approximations \cite{cohen2022correlation, cohen2023handling, cohen2024combinatorial, cao2024understanding, cao2025solving}.
One such method is based on rounding the so-called \emph{Cluster LP}, which is a linear program relaxation of CC. Despite its exponential size, the program is shown to be polynomial time solvable \cite{cao2024understanding}, or even in sublinear time \cite{cao2025solving}. In a follow up work Kalavas et al.~\cite{kalavas2025towards}[Theorem 1.1] showed that if one could solve the \emph{Constrained Cluster LP} in poly-time, then CCC admits a $\approx 1.92$ approximation.

\paragraph{$(ii)$ Cluster Deletion.}
This classic problem asks for the following:
Given an unsigned graph, find the minimum number of edges to delete such that the connected components of the remaining graph are cliques \cite{SHAMIR2004173}. 
Equivalently, the remaining graph cannot have an induced $P_3$.
Cluster Deletion (CD) is a special case of CCC, by setting all the negative edges as cannot-link pairs, without any must-link pairs.

Charikar et al.~\cite{CHARIKAR2005360} first gave a 4-approximation for CD. 
Since CD is a special case of CCC, the results of van Zuylen and Williamson also imply a 3-approximation for CD \cite{van2009deterministic}.
Subsequently, Veldt et al.~\cite{veldt2018correlation} gave a 2-approximation. All these algorithms are LP-based.
The known inapproximability bounds for CD are APX-hardness \cite{SHAMIR2004173} and NP-hardness for constant-factor better than $\frac{2137}{2136}$ \cite{adriaenssimple}.

\paragraph{$(iii)$ Minimum Weakness Strong Triadic Closure.}
This problem is in some sense an easier variant of CD. Instead of \emph{deleting} edges, the Minimum Weakness Strong Triadic Closure problem (MinSTC) only asks for an edge \emph{labeling}. Namely, find a smallest sized set of \emph{labeled} edges, such that every induced $P_3$ contains a labeled edge~\cite{sintos2014using}.
The difference between CD and MinSTC is that deleting edges might create new induced $P_3$ subgraphs, so one has to be more careful when designing algorithms for CD.
There are known graphs where the optimal MinSTC and CD values differ \cite{gruttemeier2020relation}.
This problem was first introduced by Sintos and Tsaparas~\cite{sintos2014using} as a way of inferring the strength of ties in social networks \cite{easley2010networks}, by leveraging the triadic closure property from sociology \cite{granovetter1973strength}.

Sintos and Tsaparas~\cite{sintos2014using} showed that MinSTC admits a simple 2-approximation by using any 2-approximation for vertex cover on the Gallai graph \cite{le1996gallai}. The only lower bound on approximability appears to be NP-hardness for constant-factor better than $\frac{2137}{2136}$ \cite{adriaenssimple}. Gr\"uttemeier and Komusiewicz~\cite{gruttemeier2020relation} asked if there exists a constant-factor approximation for MinSTC with factor smaller than 2.

\subsection{Contributions}
\label{sec:contr}
Our first contribution is that, by using a short and elegant reduction, we show that all the above problems (CCC, CD and MinSTC) are UG-hard to approximate with factor $2-\epsilon$ for any constant $\epsilon>0$. Independent and concurrent of our work, both Cao and Xu~\cite{cao2026cluster} and Azizeddin et al.~\cite{azizeddin2026constrained} proved the same hardness results for CD and CCC, using the same reduction.

Assuming the UGC is true, this has the following implications.
\begin{itemize}
  \item For CCC, this refutes the assumption of Kalavas et al.~\cite{kalavas2025towards} that the Constrained Cluster LP can be $(1+\epsilon)$ approximated in poly-time. This result also holds for CCC instances without any must-link constraints.
  \item For CD, this shows that the approximation ratio provided by Veldt et al.~\cite{veldt2018correlation} is tight.
  \item For MinSTC, this negatively answers a question by Gr\"uttemeier and Komusiewicz~\cite{gruttemeier2020relation}.
\end{itemize}

Secondly, we show that a simple CKR partition \cite{calinescu2005approximation} can be used to 3-approximate CCC.
This matches the original guarantee by van Zuylen and Williamson~\cite{van2009deterministic}.

CKR partitions are a standard tool for decomposing metric spaces into low diameter parts, and play a fundamental role in metric embedding theory \cite{fakcharoenphol2003improved, fakcharoenphol2004tight,mendel2007ramsey}.
Prior work has used (variants of) CKR decompositions for correlation clustering before, but only in the specific context of local correlation clustering \cite{kalhan2019correlation,jafarov2021local}.\footnote{In local correlation clustering, one seeks to minimize the $\ell_p$ norm of the vector defined by counting the mistakes incident to each vertex.}
A CKR partition works differently from a typical pivot-based approach, as a CKR partition is allowed to select centers that already have been clustered in previous iterations.

\begin{theorem}[CKR partitions for CCC]
\label{thm:main}
A simple randomized decomposition by Calinescu, Karloff and Rabani \cite{calinescu2005approximation} for partitioning finite metric spaces gives a 3-approximation for CCC.
\end{theorem}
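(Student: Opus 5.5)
We run the CKR decomposition on the semi-metric $x$ given by an optimal solution of the standard LP relaxation $\lpccc$. This LP has variables $x_{uv}\in[0,1]$ satisfying the triangle inequalities, with $x_{uv}=0$ for must-link pairs and $x_{uv}=1$ for cannot-link pairs. Its objective is $\sum_{uv\in E^+}x_{uv}+\sum_{uv\in E^-}(1-x_{uv})$, and its value is at most $\OPT$. The algorithm draws a uniformly random permutation $\pi$ of $V$ and, independently, a radius $r$ uniform in $[0,\tfrac12)$. Each vertex $v$ joins the cluster of the first center $c$ in $\pi$ with $x_{cv}\le r$. This center always exists because $x_{vv}=0$. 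Centers may already be clustered, as in the CKR scheme.

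\textbf{Step 1 (feasibility).} If $x_{uv}=0$, the triangle inequality gives $x_{cu}=x_{cv}$ for every $c$. Hence $u$ and $v$ see exactly the same set of candidate centers and pick the same one, so must-link constraints hold. If $x_{uv}=1$, a common center $c$ would need $x_{cu},x_{cv}\le r<\tfrac12$, which contradicts $x_{cu}+x_{cv}\ge 1$. So cannot-link constraints hold deterministically.

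\textbf{Step 2 (negative pairs).} If $u$ and $v$ share a center $c$, then $x_{uv}\le x_{cu}+x_{cv}\le 2r$. Therefore
$$\Pr[u,v \text{ together}]\le \Pr[r\ge x_{uv}/2]=1-x_{uv},$$
so each negative pair costs at most its LP contribution, with ratio $1$.

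\textbf{Step 3 (positive pairs, the main obstacle).} A positive pair $uv$ is cut only if the deciding center $c$ (the first in $\pi$ whose ball reaches $u$ or $v$) satisfies $\min(x_{cu},x_{cv})\le r<\max(x_{cu},x_{cv})$. This band has length at most $x_{uv}$, so it has probability at most $2x_{uv}$ for any fixed $c$. A naive union bound over centers, however, only gives the usual logarithmic CKR bound, so a per-edge argument alone will not yield $3$. If $x_{uv}\ge \tfrac13$, the trivial bound $\Pr[\text{cut}]\le 1\le 3x_{uv}$ suffices.

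For $x_{uv}<\tfrac13$ I would switch to a global, triangle-based charging in the spirit of Ailon--Charikar--Newman and van Zuylen--Williamson. Fix $r$. Consider the step at which $c$ becomes the deciding center for the pair $uv$; by symmetry of $\pi$, this is the first element of the set $S_r(u,v)$ of vertices within distance $r$ of $u$ or $v$. Charge the expected mistake on $uv$ to the LP budget of the triangle $\{c,u,v\}$. In this event one of $cu,cv$ lies inside the ball and the other outside, which forces specific LP mass on the two other sides of the triangle. The goal is to show, for every triangle and after averaging over $r\in[0,\tfrac12)$, that
$$\Pr[\text{triangle's pair is cut or joined by } c]\cdot(\text{mistake}) \le 3\,\Pr[c \text{ decides}]\cdot(\text{LP cost of the triangle}).$$
The slack of Step~2, where negative pairs are charged only factor $1$, is available to pay for positive pairs. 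Summing over triangles, using that the deciding events for a fixed pair partition the probability space, would then give $\Exp[\text{cost}]\le 3\cdot\lpccc\le 3\,\OPT$.

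I expect this triangle case analysis over the signs of the three pairs, integrated over $r$, to be the crux. If it fails with the uniform radius on $[0,\tfrac12)$, I would first try a radius distribution supported on a subinterval $[\alpha,\tfrac12)$, trading some of the factor-$1$ slack on negative pairs for a better cut probability on short positive pairs.
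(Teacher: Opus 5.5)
\textbf{Gap: Step 3 is not proved.} Your Steps 1 and 2 are correct and match the paper: must-link pairs have identical balls, cannot-link pairs have disjoint balls since $r<1/2$, and $\Pr[u,v\text{ together}]\le 1-x_{uv}$. But for positive pairs with $x_{uv}<1/3$ you only state the inequality you hope for and call it the crux. That case is the whole theorem. Moreover, the triangle framework cannot be summed the way you describe. In the Ailon--Charikar--Newman and van Zuylen--Williamson analyses, a pivot decides all three pairs of a triangle at once, so the triangle events containing a fixed pair are disjoint and their probabilities sum to at most $1$. In a CKR partition, the three pairs of $\{c,u,v\}$ are decided by the first vertices of three different sets: $B_u\cup B_v$, $B_c\cup B_u$ and $B_c\cup B_v$. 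The fact that the deciding events of $uv$ partition the probability space only controls how often the LP value of $uv$ itself is used. The LP value of $cu$ is invoked by every triangle $\{c,u,v\}$, with weights such as $\Pr[c\text{ decides }uv]=1/|B_u\cup B_v|$. These events overlap, since an early $c$ decides many pairs at once, and the weights can sum to $\Theta(n)$; for example, take $B_u=\{u,c\}$ with all other vertices isolated. What is missing is a bound on the total charge that lands on a single pair.

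\textbf{How the paper closes it.} The paper works pair by pair, conditional on $r$. Only \emph{short} positive edges ($x_{uv}\le r$) redistribute their cut cost $\sum_{z\in B_u\triangle B_v}1/|B_u\cup B_v|$. For $z\in B_u\setminus B_v$, the term goes to the near edge $uz$ if $uz\in E^-$ (then $x_{uz}\le r$), and otherwise to the far edge $zv$ (then $x_{zv}>r$); the case $z\in B_v\setminus B_u$ is symmetric. This is a triangle-based charge, but the bookkeeping is on the receiving pair:
\begin{itemize}
\item A long positive edge $uv$ receives its own cost plus at most $|B_u\cap B_v|$ terms of size at most $1/|B_u|$ and at most $|B_u\cap B_v|$ terms of size at most $1/|B_v|$. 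This totals at most $2$ for $x_{uv}/2\le r<x_{uv}$ and at most $1$ for $r<x_{uv}/2$, where the balls are disjoint.
\item Short positive edges end with $0$.
\item A negative edge receives $0$ below $x_{uv}/2$. It receives at most $3$ on $[x_{uv}/2,x_{uv})$, from its own cost plus far charges. It receives at most $2$ on $[x_{uv},1/2)$, from its own cost plus at most $|B_u\setminus B_v|/|B_u|+|B_v\setminus B_u|/|B_v|$ in near charges.
\end{itemize}
Integrating against density $2$ gives at most $3x_{uv}$ and $3(1-x_{uv})$. This is where the slack on negative pairs that you point to is actually spent, but it requires this explicit per-pair counting rather than a per-triangle inequality. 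Your fallback of changing the radius distribution would change the algorithm and is not needed.
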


\subsection{Other related work}
Both \cite{fischerSTAC} and \cite{veldt2026simple} studied how to approximate CCC within constant factor, but without solving the expensive canonical LP relaxation. \cite{fischerSTAC} gave a 16-approximation in $\tO(n^3)$ time, while \cite{veldt2026simple} recently improved this to a $3+\epsilon$ approximation in a similar running time. 

The CD problem is known to be efficiently solvable on cographs \cite{gao2013cluster}, split graphs \cite{bonomo2015complexity}, interval graphs \cite{konstantinidis2021cluster} and other graph classes \cite{gruttemeier2020relation,komusiewicz2012cluster, bonomo2015one,galesi2025cluster,konstantinidis2025algorithms}.
Recent work has proposed faster combinatorial approximations which scale to large graphs \cite{CDbalmaseda,veldt2022correlation}.
The vertex-deletion variant of CD, which asks to remove vertices instead of edges, is known to have a 2-approximation and this is also tight under the UGC \cite{aprile2023tight}.

MinSTC remains NP-hard on graphs with degree four \cite{konstantinidis2018strong} and split graphs \cite{konstantinidis2020maximizing}.
On the other hand, the problem is solvable in polynomial time when the graph is subcubic or a co-graph \cite{konstantinidis2018strong}, bipartite \cite{sintos2014using} or a proper interval graph \cite{konstantinidis2020maximizing}.
A complete complexity dichotomy between tractable and NP-hard cases for $H$-free graphs is given in \cite{gruttemeier2020relation}, where $H$ is a graph of at most four vertices.
MinSTC has also been studied in the context of kernelization and fixed parameter tractability \cite{gruttemeier2020relation,golovach2020parameterized,CaoImproved}, see the survey \cite{crespelle2023survey} for a detailed discussion. Following the work of \cite{sintos2014using}, there have been multiple different approaches for predicting tie-strengths in social networks based on the STC principle \cite{adriaens2020relaxing, rozenshtein2017inferring,oettershagen2026consistent,oettershagen2025inferring}.

\section{Reduction}
\label{sec:reduc}
We show the UG-hardness result for Cluster Deletion (CD).
As CD is a special case of CCC, the result also holds for CCC. The hardness proof for MinSTC follows a very similar reasoning and is omitted for brevity. Our reduction is inspired by the recent work on \emph{bad triangle transversals} \cite{adriaenssimple}, which uses a somewhat similar reduction from vertex cover, but without adding a large clique.

For a graph $G=(V,E)$ with $|V|=n$, let $\tau(G)$ denote the size of a minimum vertex cover of $G$ and let $\alpha(G)$ denote the size of a maximum independent set. We will reduce from the following classic gap result by Khot and Regev \cite{khot2008vertex}.

\begin{theorem}[\cite{khot2008vertex}]
\label{thm:khot}
For every $\delta>0$, it is UG-hard to decide between $\tau(G) \geq (1-\delta)n$ and $\tau(G) \leq (\frac{1}{2}+\delta)n$.    
\end{theorem}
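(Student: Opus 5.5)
This is the Khot--Regev theorem, cited from \cite{khot2008vertex} and used here as a black box. If I had to reprove it, I would follow the standard route: reduce from Unique Games using a $p$-biased long code with $p=\tfrac12-\epsilon$. It is easier to argue with independent sets: I would show it is UG-hard to distinguish $\alpha(G)\ge(\tfrac12-\delta)n$ from $\alpha(G)\le\delta n$. This is equivalent to the stated vertex-cover gap because $\tau(G)=n-\alpha(G)$.

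\textbf{Construction.} Start from a UG instance that is bipartite with left side $L$, right side $W$, label set $[R]$ and bijections $\pi_{v,w}$. I would take the version with the stronger completeness needed for the vertex-cover application, in which some labeling of almost all of $W$ satisfies all constraints at $v$ for almost every $v\in L$. Khot and Regev show this version is equivalent to the UGC.
\begin{enumerate}[label=(\alph*)]
\item For every $w\in W$, place a copy of the cube $\{0,1\}^R$, weighted by the $p$-biased measure $\mu_p$.
\item For every $v\in L$ and every two neighbours $w,w'$ of $v$, join $(w,x)$ and $(w',y)$ whenever the sets $\pi_{v,w}^{-1}(x)$ and $\pi_{v,w'}^{-1}(y)$ are disjoint.
\item To pass from weights to an unweighted graph, replicate vertices in proportion to their $\mu_p$-weight. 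This only changes the measures by a $\delta$-small amount.
\end{enumerate}

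\textbf{Completeness.} Given a good labeling $\ell$, take in each block $w$ the dictator set $\{x: x_{\ell(w)}=1\}$. Whenever the labels agree through $\pi$, two such points intersect in the mapped coordinate, so they are not adjacent and the union is independent. Its measure is $p\approx\tfrac12$. Throwing away the bad blocks costs only $\delta$.

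\textbf{Soundness (the main obstacle).} Suppose $I$ is an independent set of measure at least $\delta$.
\begin{enumerate}[label=(\roman*)]
\item By averaging, for many $v$ the blocks $I_w$ at a constant fraction of the neighbours $w$ of $v$ have $\mu_p$-measure at least $\delta/2$.
\item The pulled-back families at such neighbours are pairwise cross-intersecting. Because $p<\tfrac12$, a cross-intersecting family of non-negligible $\mu_p$-measure must have a coordinate of large low-degree influence. This step uses Friedgut's junta theorem together with a Dinur--Safra style argument, where one first passes to monotone families and then uses Russo--Margulis.
\item Decode the labels by picking a random influential coordinate in each block. Cross-intersection forces the influential sets of $w$ and $w'$ to meet after applying $\pi$, so the random labeling satisfies a constant fraction of the constraints. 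This contradicts UG soundness.
\end{enumerate}

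The delicate quantitative part is step (ii): making the influence bound uniform in $R$ and keeping the parameters consistent as $p\to\tfrac12$. That is exactly where Khot--Regev's analysis does the work, so I would take it from their paper rather than redo it.
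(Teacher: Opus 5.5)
The paper gives no proof of this statement. It imports it from Khot--Regev and uses it only as the source gap for the cluster-deletion reduction. So citing it as a black box is exactly what the paper does. Your long-code outline ($p$-biased cubes with $p=\tfrac12-\epsilon$, cross-intersecting families, influence decoding) is the standard route, but as written its completeness step fails.

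You place the long codes on $W$ and create edges through centers $v\in L$. Yet your strengthened completeness puts the fully-satisfied condition on the center side: almost every $v\in L$ has all its constraints satisfied. That does not let you discard only a small fraction of blocks.
\begin{itemize}
\item A single bad center $v$ with two neighbours $w,w'$ whose labels pull back to different elements of $[R]$ already creates an edge between the two dictator sets. For example, the singletons $\{\ell(w)\}$ and $\{\ell(w')\}$ have disjoint pull-backs, so one block of every such pair must be discarded.
\item The violated constraints at an $\eta$-fraction of centers can touch up to a $\min(1,\eta d_W)$ fraction of blocks, where $d_W$ is the degree on the $W$ side. In general $d_W$ cannot be taken small compared to $1/\eta$.
\end{itemize}
So ``throwing away the bad blocks costs only $\delta$'' does not follow.

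The condition has to sit on the block side. You need a set $W'\subseteq W$ with $|W'|\ge(1-\eta)|W|$ and a labeling satisfying \emph{every} constraint incident to $W'$. This matches the shape of Khot--Regev's strengthened conjecture, where the large fully-satisfied set lives on the vertices carrying the long codes. Then for every center $v$ and all $w,w'\in N(v)\cap W'$, both pulled-back dictators contain $\ell(v)$. The union over $W'$ is therefore independent with measure at least $(1-\eta)p$, and only blocks outside $W'$ are lost. Alternatively, keep your $L$-side condition but put the long codes on $L$ and route the edges through $W$. With that fix, and the cross-intersection/influence lemma taken from Khot--Regev as you propose, the sketch is consistent with the cited result.
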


Construct a graph $H$ in the following way.
$H$ consists of $\bar{G}$ (the complement of $G$) and a clique $K$ of $n^3$ vertices. Connect every node in $K$ to every node in $\bar{G}$. This finalizes the construction of $H$.
Let $\OPT_{\text{CD}}(H)$ be the optimum value of the CD problem on $H$. The following two inequalities are key.

\begin{lemma}
\label{lem:ineqs}
$\tau(G)n^3 \leq \OPT_{\text{CD}}(H) \leq \tau(G)n^3 + n^2$
\end{lemma}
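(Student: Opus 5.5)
For the upper bound I will build an explicit clustering of $H$ from a minimum vertex cover. Let $C$ be a minimum vertex cover of $G$ with $|C|=\tau(G)$, and let $I=V\setminus C$. Since $I$ is independent in $G$, it is a clique in $\bar G$. Take the cluster $K\cup I$, which is a clique of $H$, and put every vertex of $C$ in its own singleton cluster.

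The deleted edges are then of two kinds:
\begin{itemize}
\item the $|C|\cdot n^3=\tau(G)n^3$ edges from $C$ to $K$;
\item the edges of $\bar G$ incident to $C$, of which there are at most $\binom{n}{2}\le n^2$.
\end{itemize}
This gives $\OPT_{\text{CD}}(H)\le \tau(G)n^3+n^2$.

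For the lower bound, fix an optimal cluster deletion solution of $H$, i.e.\ a partition of $V(H)$ into cliques. I will show that at least $\tau(G)n^3$ of the edges between $K$ and $V$ must be cut. The issue is that $K$ may be split across several clusters. So for each $v\in V$ and each cluster $S$, let $k_S=|S\cap K|$. Vertex $v$ lies in exactly one cluster $S_v$, and the number of its cut $K$-edges is $n^3-k_{S_v}$.

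The key observation is that vertices of $V$ in the same cluster form a clique in $\bar G$, hence an independent set in $G$. Let $S_1,\dots,S_m$ be the clusters meeting $V$, with $V_j=S_j\cap V$. Each $V_j$ is independent in $G$, so $|V_j|\le \alpha(G)=n-\tau(G)$. The cost in $K$--$V$ edges is
\[
\sum_j |V_j|\,(n^3-k_j)=n^4-\sum_j |V_j|k_j ,
\]
subject to $\sum_j k_j\le n^3$. Since each $|V_j|\le \alpha(G)$,
\[
\sum_j |V_j|k_j\le \alpha(G)\sum_j k_j\le \alpha(G)n^3 .
\]
Hence the cost is at least $n^4-\alpha(G)n^3=(n-\alpha(G))n^3=\tau(G)n^3$.

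The step I expect to need the most care is this lower-bound accounting, because splitting $K$ across several clusters must not help. The weighted-average bound above handles it directly: a $V$-part can gain only through the $K$-vertices in its own cluster, and the $K$-vertices are shared among all parts. Throughout I use only that clusters are cliques and that $\tau(G)+\alpha(G)=n$.
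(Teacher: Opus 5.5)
Your proof is correct. Your upper bound is the paper's construction: $K$ merged with a maximum clique of $\bar G$, and the remaining vertices of $\bar G$ as singletons.

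For the lower bound you take a partition-based route. You group $V$ by cluster, note that each $V_j$ is independent in $G$, and handle the splitting of $K$ with the averaging bound $\sum_j |V_j|k_j\le\alpha(G)\sum_j k_j$. The paper argues locally instead. Fix $v\in K$; every edge $ab$ of $G$ gives an induced $P_3$ $a\,v\,b$ in $H$, so the set of $u\in V$ with $uv$ deleted must be a vertex cover of $G$. That gives at least $\tau(G)$ deleted edges at each of the $n^3$ vertices of $K$, and these edge sets are disjoint.

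The two arguments are the same double count. Write $V_{j(v)}$ for the $V$-part of the cluster containing $v\in K$ (empty if that cluster misses $V$). Then $\sum_j |V_j|k_j=\sum_{v\in K}|V_{j(v)}|$, and $|V_{j(v)}|\le\alpha(G)$ is just the complementary form of the paper's observation that $V\setminus V_{j(v)}$ is a vertex cover.

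The paper's phrasing still buys two things:
\begin{itemize}
\item The issue you flagged as delicate, $K$ being split across several clusters, never arises, because clusters are never mentioned.
\item It uses only that every induced $P_3$ of $H$ loses an edge, not that the final parts are cliques. That weaker condition is exactly feasibility for MinSTC. So the same lower bound holds verbatim for MinSTC, with the upper bound inherited from $\OPT_{\text{MinSTC}}(H)\le\OPT_{\text{CD}}(H)$. This is why the paper can dismiss MinSTC as ``very similar.''
\end{itemize}
Your argument, by contrast, is tied to the clique-partition structure of CD/CCC, though it has the mild advantage of being stated directly in the clustering language used for CCC.
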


\begin{proof}
For the lower bound, fix a node $v \in K$. Observe that every edge in $G$ (which corresponds to a missing edge in $\bar{G}$) forms an induced $P_3$ in $H$ centered at $v$. For a feasible CD solution $C$ of the graph $H$, note that the set $\{u \in \bar{G}: (u,v) \in C\}$ must form a valid vertex cover of $G$, because if not, then some induced $P_3$ centered at $v$ does not contain an edge from $C$.
Since $|K|=n^3$ it follows that $\tau(G)n^3 \leq \OPT_{\text{CD}}(H)$.

For the upper bound, form a partition $\mathcal{P}$ of the nodes of $H$ as follows: $(a)$ group together the nodes of $K$ with the nodes from a maximum clique in $\bar{G}$, and $(b)$ all remaining nodes of $\bar{G}$ are singletons.
The partition $\mathcal{P}$ is feasible for CD because all its elements are cliques. A maximum clique in $\bar{G}$ corresponds to a maximum independent set in $G$, and we also know that $\alpha(G)+\tau(G)=n$.
Therefore, the number of singletons is exactly $\tau(G)$.
A \emph{cut} edge is an edge of $H$ with endpoints belonging to different clusters of $\mathcal{P}$.
The number of cut edges between $\bar{G}$ and $K$ is exactly $\tau(G)n^3$. There can also be cut edges inside $\bar{G}$, of which there are at most $n^2$.
\end{proof}

From Lemma~\ref{lem:ineqs} and Theorem~\ref{thm:khot} the result follows immediately, as the CD gap ratio on $H$ becomes
\begin{equation*}
\frac{(1-\delta)n^4}{(\frac{1}{2}+\delta)n^4+n^2}.  
\end{equation*}
For any $\epsilon>0$, one can make this gap ratio larger than $2-\epsilon$ by setting $\delta$ sufficiently small and $n$ large enough (depending on $\epsilon$).

\section{CKR Partitions for CCC}
\label{sec:ckr}
Let $G=(V,E^+,E^-)$ be a given signed graph, with $E^+$ (resp. $E^-$) the positive (resp. negative) edges.
A Constrained Correlation Clustering (CCC) instance consists of G, together with a set of the must-link pairs $F$ and cannot-link pairs $H$.
The canonical LP relaxation of CCC \cite{van2009deterministic} is given by
\begin{align}
\text{min} \quad &\sum_{uv \in E^+}x_{uv} + \sum_{uv \in E^-}1-x_{uv}, &\tag{LP\(_{CCC}\)}\\
\text{s.t.}\quad &x_{uv} \leq x_{uw}+x_{vw}  & \forall u,v,w \in V, \notag\\
&x_{uv} = 0 &\forall uv \in F, \notag\\
&x_{uv} = 1 &\forall uv \in H, \notag\\
&0 \leq x_{uv} \leq 1 &\forall u,v \in V. \notag
\end{align}
If the distances are binary, then $x_{uv}=0$ indicates that $u$ and $v$ belong to the same cluster, while $x_{uv}=1$ indicates they belong to different clusters. The triangle inequality ensures this forms a valid partition. The constraints on $F$ and $H$ ensure no link constraints are violated.
We assume that there exists a feasible partition that does not violate any link constraints, otherwise $\lpccc$ is infeasible.

 Consider the following randomized algorithm for rounding $\lpccc$ into feasible clusters.
The algorithm is nothing more than applying the well-known low-diameter decomposition for general metric spaces proposed by Calinescu, Karloff and Rabani \cite{calinescu2005approximation}, to the semi-metric obtained by solving $\lpccc$.
Write $V = \{v_1, \ldots, v_n\}$ and let $B(v,r) = \{u \in V: x_{uv} \leq r \}$ be the closed ball around $v$ with radius $r$.

\begin{algorithmBox}{\textsc{CKR-Pivot} for rounding $\lpccc$}
\begin{enumerate}
  \item Solve $\lpccc$ to obtain semi-metric $\{x_{uv}\}$. 
  \item Draw a uniform random permutation $\pi$ of $\{1,\ldots,n\}$.
  \item Draw a uniform random number $R \in [0,1/2)$.
  \item Define $C_1=B(v_{\pi(1)},R)$ and for $2 \leq j \leq n$ define
  \begin{equation*}
      C_j = B(v_{\pi(j)},R) \setminus \bigcup_{i<j}C_i.
  \end{equation*}
  \item Return partition $\mathcal{P} = \{C_1,\ldots,C_n\} \setminus \{\emptyset\}$.
\end{enumerate}
\end{algorithmBox}

Before diving into the analysis, let us highlight the differences with the original 3-approximation for CCC~\cite{van2009deterministic}.
The approach of van Zuylen and Williamson~\cite{van2009deterministic} consists of first constructing an auxiliary graph $\hat{G}$, followed by executing the well-known pivot algorithm \cite{ailon2008aggregating} on $\hat{G}$. The pivots can be chosen either deterministically or uniformly at random.
The construction of the auxiliary graph $\hat{G}$ is rather cumbersome. For example, if $x_{uv}=1/2$ then deciding if $uv$ becomes a positive edge in $\hat{G}$ depends on whether $u$ and $v$ belong to the same connected component induced by the must-link pairs $F$. This tie-breaking rule is needed to ensure that no link constraints are violated when executing the pivot step.

In contrast, \textsc{CKR-Pivot} works directly on the $\lpccc$ metric and differs from \cite{van2009deterministic} in three major ways:

\begin{itemize}
    \item \textsc{CKR-Pivot} is \emph{oblivious} to the must-link or cannot-link constraints. In other words, after obtaining a feasible $\lpccc$ solution, the algorithm no longer uses $F$ or $H$.
    \item In \textsc{CKR-Pivot}, a vertex $u$ will be assigned to cluster $C_j$ iff $v_{\pi(j)}$ is the first vertex according to the permutation $\pi$ among all vertices in $B(u,R)$. Note that some clusters can be empty, indeed, they might not even contain their own center. This is very different from a usual pivot method. At at any iteration, pivot selects a new pivot (a center) only from the remaining pool of unclustered vertices. A CKR partition is allowed to select centers that already have been clustered in a previous iteration.
    \item On general signed graphs, \textsc{CKR-Pivot} gives an expected $\bigO(\log n)$ ratio for CCC (see Section~\ref{sec:gensg}). This is not guaranteed by the pivot method of \cite{van2009deterministic}.
\end{itemize}

\subsection{Expected cost and link feasibility}
Here we derive the general form of the expected cost and verify that the proposed CKR rounding does not violate any must-link or cannot-link constraints.
For $u \in V$ let $\mathcal{P}(u)$ be the unique cluster of $\mathcal{P}$ containing $x$.
Let $\text{cost}(\mathcal{P})$ denote the number of mistakes made by \textsc{CKR-Pivot}.
Conditional on $R=r$, its expectation can be written as
\begin{align}
\Exp[\text{cost}(\mathcal{P}) | R=r] &= \sum_{uv \in E^+} \text{Pr}[\mathcal{P}(u) \neq \mathcal{P}(v) | R=r ] + \sum_{uv \in E^-} \text{Pr}[\mathcal{P}(u) = \mathcal{P}(v) | R=r ] \notag \\
&= \sum_{uv \in E^+}\frac{|B(u,r) \triangle B(v,r)|}{|B(u,r) \cup B(v,r)|} + \sum_{uv \in E^-}\frac{|B(u,r) \cap B(v,r)|}{|B(u,r) \cup B(v,r)|} \label{eq:expcost}.
\end{align}
 Eq.~(\ref{eq:expcost}) follows by considering the earliest vertex in $B(u,r) \cup B(v,r)$, according to $\pi$.
If this vertex is part of $B(u,r) \cap B(v,r)$, then $\mathcal{P}(u) = \mathcal{P}(v)$. If not, then $\mathcal{P}(u) \neq \mathcal{P}(v)$.

It is easy to see that the link constraints are satisfied.
If $uv \in F$, then $x_{uv}=0$. Therefore $B(u,r) \triangle B(v,r) = \emptyset$, and thus $\text{Pr}[\mathcal{P}(u) \neq \mathcal{P}(v)]=0$. Likewise, if $uv \in H$, then $x_{uv}=1$. Therefore, because $r<1/2$ we have $B(u,r) \cap B(v,r) = \emptyset$, implying $\text{Pr}[\mathcal{P}(u) = \mathcal{P}(v)] = 0$.

\subsection{Warm up: general signed graphs}
\label{sec:gensg}
So far, we have not used any properties of \emph{complete} signed graphs. Let us start by analyzing the approximation ratio in general signed graphs. 
In this case, a ratio of $\bigO(\log n)$ in expectation can be immediately retrieved, which matches the best deterministic ratio based on region growing techniques \cite{demaine2006correlation, CHARIKAR2005360}.
This is fact a folklore result, as it is well-known that a CKR decomposition gives a $\bigO(\log k)$ approximation for Multicut  (by taking a random permutation of source vertices as CKR centers \cite{williamson2011design}), which in turn gives a $\bigO(\log n)$ approximation for CC by the approximation preserving reduction from \cite{demaine2006correlation}.

We retrieve the same result by directly applying the CKR partitioning to $\lpccc$. Combining the following two lemmas gives an overall $\bigO(\log n)$ guarantee in expectation. 

\begin{lemma}
\label{len:ckrpos}
$\text{Pr}[\mathcal{P}(u) \neq \mathcal{P}(v)] \leq \bigO(\log n)\cdot x_{uv}$.
\end{lemma}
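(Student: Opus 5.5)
We follow the standard CKR analysis for Multicut, adapted to the fact that every vertex of $V$ is a potential center and $R$ is uniform on $[0,1/2)$. Fix $u,v$ and suppose first $x_{uv} < 1/2$; if $x_{uv}\ge 1/2$ the bound is trivial because probabilities are at most $1\le 2x_{uv}$.

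We say that a center $w$ \emph{settles} the pair $\{u,v\}$ if $w$ is the first vertex in $\pi$ with $w \in B(u,R)\cup B(v,R)$. We say $w$ \emph{cuts} the pair if, in addition, $w$ lies in exactly one of the two balls. By the discussion leading to Eq.~(\ref{eq:expcost}), $\mathcal{P}(u)\neq\mathcal{P}(v)$ happens exactly when the settling center cuts. Hence
\begin{equation*}
\Pr[\mathcal{P}(u)\neq\mathcal{P}(v)] \le \sum_{w\in V}\Pr[w \text{ settles and cuts}].
\end{equation*}
Order the vertices as $w_1,w_2,\ldots,w_n$ by increasing $d_w := \min(x_{uw},x_{vw})$.

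Vertex $w$ lies in exactly one ball iff $R \in [\min(x_{uw},x_{vw}), \max(x_{uw},x_{vw}))$. By the triangle inequality this interval has length $|x_{uw}-x_{vw}| \le x_{uv}$. Since $R$ is uniform on $[0,1/2)$, it follows that $\Pr[w \text{ cuts}] \le 2x_{uv}$.

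The key step, and the only real obstacle, is to bound the probability that $w_k$ settles by $1/k$ given that it cuts. On the event that $w_k$ cuts, we have $R \ge d_{w_k}$. Then every $w_j$ with $j<k$ has $d_{w_j}\le d_{w_k}\le R$, so each such $w_j$ lies in $B(u,R)\cup B(v,R)$. For $w_k$ to settle, it must precede all of $w_1,\ldots,w_k$ in $\pi$. Since $\pi$ is independent of $R$, this has probability $1/k$.

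Ties in $d_w$ are handled by any fixed tie-breaking, since the containment argument only uses $d_{w_j}\le d_{w_k}$. Putting the pieces together,
\begin{equation*}
\Pr[\mathcal{P}(u)\neq\mathcal{P}(v)] \le \sum_{k=1}^n \frac{1}{k}\cdot 2x_{uv} = 2H_n\, x_{uv} = \bigO(\log n)\, x_{uv}.
\end{equation*}
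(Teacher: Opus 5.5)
Your proof is correct. It is exactly the standard CKR analysis that the paper invokes only by citation to \cite{calinescu2005approximation}, written out self-contained for this setting (a semi-metric, all $n$ vertices as centers, $R$ uniform on $[0,1/2)$), with the explicit constant $2H_n$. One wording fix: your definition makes ``$w$ cuts'' include settling, but you actually use it to mean only ``$w$ lies in exactly one ball,'' an event depending only on $R$; define it that way, since both the bound $\Pr[w\text{ cuts}]\le 2x_{uv}$ and the independence step with $\pi$ need that reading.
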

\begin{proof}
This result holds for arbitrary finite metric space. The probability that two points $u$ and $v$ at distance $x_{uv}$ apart are in different CKR clusters is known to be at most $\bigO(\log n)\cdot x_{uv}$, see for example \cite{calinescu2005approximation}.
\end{proof}

\begin{lemma}
$\text{Pr}[\mathcal{P}(u) = \mathcal{P}(v)] \leq 1-x_{uv}$.
\end{lemma}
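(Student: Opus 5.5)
The plan is to condition on the radius $R$ and use only the triangle inequality of the $\lpccc$ semi-metric, as in the cannot-link feasibility argument above. The cost formula (\ref{eq:expcost}) shows that, conditional on $R=r$, the event $\mathcal{P}(u)=\mathcal{P}(v)$ can only happen if $B(u,r)\cap B(v,r)\neq\emptyset$. Indeed, both $u$ and $v$ must be assigned to the same center $v_{\pi(j)}$, so that center lies in both balls. I will therefore not estimate the ratio $|B(u,r)\cap B(v,r)|/|B(u,r)\cup B(v,r)|$ at all. I will simply bound it by the indicator that the intersection is nonempty.

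The first step is a deterministic observation. Suppose some $w\in B(u,r)\cap B(v,r)$. Then $x_{uw}\le r$ and $x_{vw}\le r$, so the triangle inequality constraint of $\lpccc$ gives $x_{uv}\le x_{uw}+x_{vw}\le 2r$. Contrapositively, whenever $r<x_{uv}/2$ the two balls are disjoint, and hence $\text{Pr}[\mathcal{P}(u)=\mathcal{P}(v)\mid R=r]=0$. This is the same reasoning as for cannot-link pairs, which is the special case $x_{uv}=1$.

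The second step is to integrate over $R$, which is uniform on $[0,1/2)$ with density $2$. Using the trivial bound $1$ on the conditional probability when $r\ge x_{uv}/2$, we get
\begin{equation*}
\text{Pr}[\mathcal{P}(u)=\mathcal{P}(v)] \le \text{Pr}[R\ge x_{uv}/2] = 2\left(\tfrac12-\tfrac{x_{uv}}{2}\right) = 1-x_{uv}.
\end{equation*}
This computation is valid because $0\le x_{uv}\le 1$, so $x_{uv}/2\in[0,1/2]$.

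There is no real obstacle here. The only point to check is that the LP's triangle inequality applies to arbitrary triples, including the case where the common vertex $w$ equals $u$ or $v$. That case is immediate, since $x_{uu}=0$ is implied by the constraints. Together with Lemma~\ref{len:ckrpos}, this bound compares each negative edge's expected cost to its LP contribution $1-x_{uv}$ with constant factor $1$, so the $\bigO(\log n)$ loss comes only from the positive edges.
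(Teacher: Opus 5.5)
Your proof is correct and is essentially the paper's argument: the triangle inequality makes $B(u,r)\cap B(v,r)=\emptyset$ whenever $r<x_{uv}/2$, and integrating the trivial bound $1$ against the density $2$ over $[x_{uv}/2,1/2)$ gives $1-x_{uv}$. One small remark: the case $w\in\{u,v\}$ does not need $x_{uu}=0$ at all, since then $w\in B(u,r)\cap B(v,r)$ directly gives $x_{uv}\le r\le 2r$.
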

\begin{proof}
Note that $B(u,r) \cap B(v,r) = 0$ when $x_{uv}>2r$.
Therefore, integrating the second term in Eq.~(\ref{eq:expcost}) gives
\begin{equation*}
\text{Pr}[\mathcal{P}(u) = \mathcal{P}(v)] = 2\int_{0}^{1/2}\text{Pr}[\mathcal{P}(u) = \mathcal{P}(v)|R=r] \,dr  \leq 2 \displaystyle \int_{x_{uv}/{2}}^{1/2} \,dr = 1-x_{uv}.
\end{equation*}
\end{proof}

\subsection{Complete signed graphs}
In order to reduce the $\bigO(\log n)$ factor in Lemma~\ref{len:ckrpos}, one cannot simply charge the expected cost of a positive edge to its own LP value.
We present an alternative charging scheme that gives a 3-approximation for complete signed graphs.

\paragraph{Charging scheme.}
The idea is to reallocate some of the terms in Eq.~(\ref{eq:expcost}) to other edges, as given by allocation functions $f_{uv}: [0,1/2) \rightarrow \mathbb{R}_{\ge 0}$, such that
\begin{equation}
\label{eq:reallocatecost}
\Exp[\text{cost}(\mathcal{P})| R=r] = \sum_{uv \in {V \choose 2}} f_{uv}(r),
\end{equation}
with $2\int_{0}^{1/2} f_{uv}(r) \,dr \leq 3\cdot\text{LP}(uv)$. Here, $\text{LP}(uv)$ denotes the LP cost of pair $uv$, which is $x_{uv}$ if $uv \in E^+$ and $1-x_{uv}$ if $uv \in E^-$.
To abbreviate the notation, let $B_u$ denote $B(u,r)$.
\begin{definition}
A positive edge $uv \in E^+$ is called \emph{short} if $x_{uv} \leq r$, otherwise it is called \emph{long}.
\end{definition}

Decompose the conditional cost in Eq.~(\ref{eq:expcost}) of a short positive edge $uv$ as
\begin{equation}
\frac{|B_u \triangle B_v|}{|B_u \cup B_v|} = \sum_{z\in B_u \setminus B_v}\frac{1}{|B_u \cup B_v|} + \sum_{z\in B_v \setminus B_u}\frac{1}{|B_u \cup B_v|}
\end{equation}
This cost is reallocated according to the following rules.
For every $z \in B_u \setminus B_v$, if $uz \in E^-$ then $uz$ receives $\frac{1}{|B_u \cup B_v|}$ amount of charge.
Instead, if $uz \in E^+$ then the edge $zv$ receives $\frac{1}{|B_u \cup B_v|}$ amount of charge (regardless of the sign of $zv$). In this setup, we call $uz$ a \emph{near} edge and $zv$ a \emph{far} edge associated with $uv$. Do the same for every $z \in B_v \setminus B_u$.

Lemmas~\ref{lem:allopos} and~\ref{lem:alloneg} give upper bounds on the allocation function $f$. If some intervals are not well-defined, then they are understood to be empty. Using Corollaries~\ref{cor1} and~\ref{cor2}, combined with a straightforward law of total expectation and linearity on Eq.~(\ref{eq:reallocatecost}), it follows that \textsc{CKR-Pivot} has an expected approximation ratio of 3.

\subsubsection{Positive edges}
\begin{lemma}
\label{lem:allopos}
If $uv \in E^+$, then
\begin{equation}
f_{uv}(r) \leq 
\begin{cases} 
1, & 0 \leq r < \frac{x_{uv}}{2}, \\
2, & \frac{x_{uv}}{2} \leq r < x_{uv}, \\
0, & x_{uv} \leq r <  \frac{1}{2}.
\end{cases}   
\end{equation}
\end{lemma}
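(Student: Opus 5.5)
The plan is to identify every source of charge that a positive pair $uv$ can receive under the charging scheme, and then bound their total separately on the three ranges of $r$. A positive pair $uv$ receives charge in exactly two ways. First, if $uv$ is long ($x_{uv}>r$), it keeps its own cost $\frac{|B_u\triangle B_v|}{|B_u\cup B_v|}$ from Eq.~(\ref{eq:expcost}), since only short positive edges are redistributed. Second, it may receive \emph{far} charges. It never receives near charges, because near edges are negative by definition. If $uv$ is short ($x_{uv}\le r$), its own cost is redistributed entirely to other pairs.

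\textbf{Step 1: where far charges can occur.} I would show that $uv$ receives a far charge only when $x_{uv}/2\le r< x_{uv}$. Concretely, $uv$ is a far edge for a short positive edge $wv$ when $u\in B_w\setminus B_v$ and $wu\in E^+$, and symmetrically for a short positive edge $wu$ when $v\in B_w\setminus B_u$ and $wv\in E^+$. In the first case $u\notin B_v$ gives $x_{uv}>r$. Also $x_{wv}\le r$ and $x_{wu}\le r$, so the triangle inequality gives $x_{uv}\le 2r$. In both cases $w\in I:=B_u\cap B_v$. This settles two of the three ranges:
\begin{itemize}
\item For $r<x_{uv}/2$, the only charge is the own cost, a probability, hence at most $1$.
\item For $r\ge x_{uv}$, the edge is short and receives no far charge, so the bound is $0$.
\end{itemize}

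\textbf{Step 2: the middle range $x_{uv}/2\le r<x_{uv}$.} This is the main obstacle. Naively, the own cost ($\le 1$) plus the far charges ($\le 1$ from each of the two orientations) only gives $3$, so the own cost and the far charges must be played off against each other. In the first orientation, each $w\in I$ contributes $\frac{1}{|B_w\cup B_v|}\le \frac{1}{|B_v|}$, so this orientation contributes at most $|I|/|B_v|$. Symmetrically, the second orientation contributes at most $|I|/|B_u|$. Write $i=|I|$, $a=|B_u\setminus B_v|$, $b=|B_v\setminus B_u|$, and set $p=\frac{i}{i+a}$, $q=\frac{i}{i+b}$. Then
\[
f_{uv}(r)\le 1-\frac{i}{i+a+b}+p+q .
\]

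\textbf{Step 3: the inequality.} It suffices to prove $p+q-1\le \frac{i}{i+a+b}$. A short computation gives $\frac{i}{i+a+b}=\frac{pq}{p+q-pq}$. Since $p,q\in[0,1]$, we have $(1-p)(1-q)\ge 0$, which means both $p+q-pq\le 1$ and $p+q-1\le pq$. Combining these,
\[
p+q-1\le pq\le \frac{pq}{p+q-pq},
\]
which yields $f_{uv}(r)\le 2$ on this range. The degenerate case $i=0$ is trivial, since then there are no far charges and the own cost is at most $1$.
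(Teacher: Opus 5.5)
Your proof is correct and follows the paper's argument. You split the charge on a long positive pair into its own cost plus far charges arising only from $w\in B_u\cap B_v$, reaching the same bound $\frac{|B_u\triangle B_v|}{|B_u\cup B_v|}+\frac{|B_u\cap B_v|}{|B_u|}+\frac{|B_u\cap B_v|}{|B_v|}$, and you handle $r<x_{uv}/2$ (empty intersection) and $r\ge x_{uv}$ (short edge, no incoming charge) exactly as the paper does; the one addition is that your $(1-p)(1-q)\ge 0$ argument spells out the final inequality $\le 2$, which the paper only sketches as ``common denominator and rearranging.''
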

\begin{proof}
The case $x_{uv} \leq r$ is immediate, because then $uv$ is a short edge and hence all its cost is reallocated to other edges, while also never receiving cost from other short edges.

In case $r<x_{uv}$, the edge $uv$ is a long edge and it receives two types of costs: $(i)$ its own cost, and $(ii)$ reallocated cost $\frac{1}{|B_z \cup B_u|} \leq \frac{1}{|B_u|}$ (resp. $\frac{1}{|B_z \cup B_v|} \leq \frac{1}{|B_v|}$) from a short positive edge $zu$ (resp. $zv$).
Note that there are at most $|B_u \cap B_v|$ edges of type $(ii)$, indeed, this reallocation only happens when $z \in B_u$ and $z \in B_v$. So edge $uv$ receives a cost of at most
\begin{align}
\label{eq:ineq1}
f_{uv}(r) \leq \frac{|B_u \triangle B_v|}{|B_u \cup B_v|} + \frac{|B_u \cap B_v|}{|B_u|} + \frac{|B_u \cap B_v|}{|B_v|} \leq 2.
\end{align}
The last inequality in Eq.~(\ref{eq:ineq1}) can be proven by using the definitions of the set operators, putting the terms on a common denominator and rearranging them. It remains to show the case $r < \frac{x_{uv}}{2}$. Here we use that $B_u \cap B_v = \emptyset$, and thus Eq.~(\ref{eq:ineq1}) gives $f_{uv} \leq 1$.
\end{proof}

\begin{corollary}
\label{cor1}
If $uv \in E^+$, then $2\int_{0}^{1/2} f_{uv}(r) \,dr \leq 3x_{uv}$.
\end{corollary}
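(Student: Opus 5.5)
The plan is to integrate the piecewise bound of Lemma~\ref{lem:allopos} directly. Since $f_{uv}$ is nonnegative and pointwise bounded by the step function in that lemma, we get
\begin{equation*}
2\int_{0}^{1/2} f_{uv}(r)\,dr \;\leq\; 2\left(\int_{0}^{x_{uv}/2} 1\,dr + \int_{x_{uv}/2}^{x_{uv}} 2\,dr + \int_{x_{uv}}^{1/2} 0\,dr\right).
\end{equation*}

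The only point needing care is the range of $r$. The random radius is restricted to $[0,1/2)$, but $x_{uv}$ may exceed $1/2$ (it can be as large as $1$). In that case the intervals must be truncated at $1/2$, following the paper's convention that ill-defined intervals are empty or clipped. So I would split into two cases.

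\textbf{Case $x_{uv}\le 1/2$.} All three intervals lie inside $[0,1/2)$. The right-hand side equals $2\bigl(\tfrac{x_{uv}}{2} + 2\cdot\tfrac{x_{uv}}{2}\bigr) = 3x_{uv}$, which is the claim with equality.

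\textbf{Case $x_{uv} > 1/2$.} The upper limit $1/2$ cuts off the second interval, or even the first if $x_{uv}\ge 1$; in every subcase the integrand is at most $2$. A crude bound suffices here:
\begin{equation*}
2\int_0^{1/2} f_{uv}(r)\,dr \le 2\left(\tfrac{x_{uv}}{2} + 2\left(\tfrac12-\tfrac{x_{uv}}{2}\right)\right) = 2 - x_{uv}.
\end{equation*}
This is at most $3x_{uv}$ because $x_{uv}>1/2$ gives $4x_{uv} > 2$. If $x_{uv}\ge 1$, the first interval already covers all of $[0,1/2)$, so the integral is at most $2\cdot\tfrac12 = 1 \le 3x_{uv}$.

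No step is a real obstacle, since the substantive work (the bound $f_{uv}\le 2$) is already done in Lemma~\ref{lem:allopos}. The only thing to avoid is integrating past $1/2$ and thereby overcounting.
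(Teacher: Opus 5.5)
Your proof is correct and is the same argument as the paper's: integrate the step-function bound of Lemma~\ref{lem:allopos} over $[0,1/2)$, splitting into the cases $x_{uv}\le 1/2$ (which gives exactly $3x_{uv}$) and $x_{uv}>1/2$ (which gives $2-x_{uv}\le 3x_{uv}$). Your separate subcase $x_{uv}=1$ is harmless but unnecessary, since there the interval $[x_{uv}/2,1/2)$ is empty and the formula $2-x_{uv}$ already covers it.
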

\begin{proof}
Evaluate the integral using Lemma~\ref{lem:allopos}, by distinguishing two cases $x_{uv} \leq 1/2$ and $x_{uv} > 1/2$.
\end{proof}

\subsubsection{Negative edges}
\begin{lemma}
\label{lem:alloneg}
If $uv \in E^-$, then
\begin{equation}
f_{uv}(r) \leq 
\begin{cases} 
0, & 0 \leq r < \frac{x_{uv}}{2}, \\
3, & \frac{x_{uv}}{2} \leq r < x_{uv}, \\
2, & x_{uv} \leq r <  \frac{1}{2}.
\end{cases}   
\end{equation}
\end{lemma}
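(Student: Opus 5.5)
We bound $f_{uv}(r)$ for a negative edge $uv$ by listing every way $uv$ can receive charge under the reallocation rules, and then bounding each contribution over a common set of ball quantities. Write $a=|B_u\setminus B_v|$, $b=|B_v\setminus B_u|$ and $c=|B_u\cap B_v|$. Since $uv\in E^-$, it is never a short positive edge, so it keeps its own cost $c/(a+b+c)$ from Eq.~(\ref{eq:expcost}). It can additionally receive charge in two roles: as a \emph{near} edge, or as a \emph{far} edge, of some short positive edge.

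\emph{Near charges.} The edge $uv$ is a near edge of a short positive edge $uw$ exactly when $v\in B_u\setminus B_w$; the symmetric case uses a short positive edge $vw$. In particular $v\in B_u$ is needed, i.e.\ $x_{uv}\le r$. Since $uw$ is short, $w\in B_u$, and $v\notin B_w$ means $w\notin B_v$. Hence the relevant $w$ lie in $B_u\setminus B_v$, and each contributes $1/|B_u\cup B_w|\le 1/|B_u|$. This gives a total of at most $a/(a+c)$ from the $u$ side, and symmetrically at most $b/(b+c)$ from the $v$ side.

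\emph{Far charges.} The edge $uv$ is the far edge of a short positive edge $wv$ when $u\in B_w\setminus B_v$ and $wu\in E^+$; the symmetric case swaps $u$ and $v$. This requires $u\notin B_v$, i.e.\ $x_{uv}>r$. Since $wv$ is short, $w\in B_v$. Moreover, $u\in B_w$ gives $x_{wu}\le r$, so $w\in B_u$. Hence $w\in B_u\cap B_v$, and each such $w$ contributes at most $1/|B_w\cup B_v|\le 1/|B_v|$. This gives a total of at most $c/(b+c)$, and symmetrically at most $c/(a+c)$.

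\emph{Case analysis.} We now combine these bounds according to the value of $r$.
\begin{itemize}
\item If $r<x_{uv}/2$: the balls are disjoint by the triangle inequality, so $c=0$. The own cost vanishes, far charges vanish since they require $w\in B_u\cap B_v$, and near charges vanish since $x_{uv}>r$. Hence $f_{uv}(r)=0$.
\item If $x_{uv}/2\le r<x_{uv}$: near charges are impossible, so
\[
f_{uv}(r)\le \frac{c}{a+b+c}+\frac{c}{a+c}+\frac{c}{b+c}\le 3 .
\]
\item If $x_{uv}\le r$: far charges are impossible, so
\[
f_{uv}(r)\le \frac{c}{a+b+c}+\frac{a}{a+c}+\frac{b}{b+c}.
\]
Using $c/(a+b+c)\le c/(a+c)$, this is at most $1+b/(b+c)\le 2$.
\end{itemize}

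The main obstacle is the bookkeeping in the last case. A crude bound of $1$ per term would give $3$, so the near charges must be counted through their locations in $B_u\setminus B_v$ and $B_v\setminus B_u$ rather than per vertex of $B_u$. This, together with the merging trick $c/(a+b+c)\le c/(a+c)$, is what brings the bound down to $2$. A second point needing care is verifying that the far-charge source $w$ lies in $B_u\cap B_v$, which is where shortness of $wv$ and the condition $u\in B_w$ are both used.
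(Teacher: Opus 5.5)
Your proof is correct and follows the same route as the paper: when $r<x_{uv}$ you bound the own cost plus far-edge charges by $\frac{c}{a+b+c}+\frac{c}{a+c}+\frac{c}{b+c}$ (which is $0$ when $c=0$), and when $x_{uv}\le r$ you bound the own cost plus near-edge charges by $\frac{c}{a+b+c}+\frac{a}{a+c}+\frac{b}{b+c}$. Your write-up is slightly more careful than the paper's, which writes the last near-charge term as $|B_u\setminus B_v|/|B_v|$ (evidently meaning $|B_v\setminus B_u|/|B_v|$) and leaves the final inequality to ``common denominators'', whereas your step $\frac{c}{a+b+c}\le\frac{c}{a+c}$ settles it in one line.
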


\begin{proof}
Let us start with the case $r < x_{uv}$. In this case, in addition to receiving its own cost, $uv$ can also receive cost by being the far edge of an adjacent short positive edge. Similarly to Eq.~(\ref{eq:ineq1}) we have
\begin{align}
\label{eq:ineq2}
f_{uv}(r) \leq \frac{|B_u \cap B_v|}{|B_u \cup B_v|} + \frac{|B_u \cap B_v|}{|B_u|} + \frac{|B_u \cap B_v|}{|B_v|} \leq 3,
\end{align}
where the last inequality follows simply because each term is at most one. Note also that in case $r < \frac{x_{uv}}{2}$, then $B_u \cap B_v = \emptyset$ and therefore $f_{uv}(r)=0$ by Eq.~(\ref{eq:ineq2}). Only the case $x_{uv} \leq r$ remains, in which $uv$ receives its own cost, while it might also receive cost as the near edge of an adjacent short positive edge. If the adjacent short edge is incident to $u$, then it is not hard to see that there are most $B_u \setminus B_v$ such edges. A similar observation holds for short edges incident to $v$.
So edge $uv$ receives a cost of at most
\begin{align}
\label{eq:ineq3}
f_{uv}(r) \leq \frac{|B_u \cap B_v|}{|B_u \cup B_v|} + \frac{|B_u \setminus B_v|}{|B_u|} + \frac{|B_u \setminus B_v|}{|B_v|} \leq 2.
\end{align}
The last inequality in Eq.~(\ref{eq:ineq3}) can be proven by putting all terms on a common denominator and rearranging them.
\end{proof}

\begin{corollary}
\label{cor2}
If $uv \in E^-$, then $2\int_{0}^{1/2} f_{uv}(r) \,dr \leq 3(1-x_{uv})$.
\end{corollary}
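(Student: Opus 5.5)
We integrate the piecewise bound of Lemma~\ref{lem:alloneg} over $r \in [0,1/2)$, mirroring the proof of Corollary~\ref{cor1}. We split into two cases according to whether $x_{uv} \le 1/2$ or $x_{uv} > 1/2$, since this determines which of the three intervals of Lemma~\ref{lem:alloneg} actually intersect $[0,1/2)$. Intervals that are not well-defined are treated as empty.

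\emph{Case $x_{uv} \le 1/2$.} All three intervals are present. The first contributes $0$, the second contributes $3 \cdot x_{uv}/2$, and the third contributes $2(1/2 - x_{uv})$. Hence
\begin{equation*}
2\int_0^{1/2} f_{uv}(r)\,dr \le 2\Bigl(\tfrac{3x_{uv}}{2} + 1 - 2x_{uv}\Bigr) = 2 - x_{uv}.
\end{equation*}
We need $2 - x_{uv} \le 3(1-x_{uv})$, which is equivalent to $x_{uv} \le 1/2$ and so holds in this case. Equality occurs at $x_{uv} = 1/2$, which is consistent with the tightness of the constant $3$.

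\emph{Case $x_{uv} > 1/2$.} The third interval is empty, and the middle interval is truncated to $[x_{uv}/2, 1/2)$. This gives
\begin{equation*}
2\int_0^{1/2} f_{uv}(r)\,dr \le 2 \cdot 3\Bigl(\tfrac12 - \tfrac{x_{uv}}{2}\Bigr) = 3(1 - x_{uv}),
\end{equation*}
which is exactly the desired bound.

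There is no real obstacle here, since all the work is carried by Lemma~\ref{lem:alloneg}. The only point needing care is the boundary case $x_{uv} = 1/2$ in the first case, where the inequality $2 - x_{uv} \le 3(1-x_{uv})$ is tight and must be checked correctly.
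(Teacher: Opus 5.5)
Your proof is correct and follows the paper's approach exactly. You integrate the piecewise bound of Lemma~\ref{lem:alloneg} and split into the cases $x_{uv}\le 1/2$, which gives $2-x_{uv}\le 3(1-x_{uv})$, and $x_{uv}>1/2$, which gives exactly $3(1-x_{uv})$; your arithmetic in both cases checks out.
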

\begin{proof}
Evaluate the integral using Lemma~\ref{lem:alloneg}, by distinguishing two cases $x_{uv} \leq 1/2$ and $x_{uv} > 1/2$.
\end{proof}

\section{Discussion}
\label{sec:discussion}
We presented some negative and positive results on the constrained correlation clustering problem and variants.
Using a relatively simple reduction from vertex cover, we show that if the unique games conjecture is true, then it is NP-hard to obtain a $2-\epsilon$ approximation for the constrained correlation clustering, cluster deletion and minimum weakness strong triadic closure problems. This significantly improves the previous known hardness results on these problems, where often only APX-hardness or NP-hardness for approximation within some small specified constant close to one was known.

On the positive side, we propose a standard CKR decomposition as an approximation algorithm for (constrained) correlation clustering. Unfortunately, we are only able to prove a 3-approximation guarantee, which matches the guarantee of the pivot-based algorithm by \cite{van2009deterministic}.
Although, after extensive numerical testing, it appears the current analysis is not tight and therefore we conjecture the following.

\begin{conjecture}
\label{conj}
\textsc{CKR-Pivot} is a smaller-than-3 approximation for CCC.
\end{conjecture}

CKR decompositions can be a useful alternative to more traditional pivot-based approaches. One advantage is its flexibility, in the sense that it can be applied to several different variants of correlation clustering (one simply has to change the LP). For example, in \emph{multilayer} correlation clustering over $L$ layers \cite{miyauchi2026multilayer}, it only takes a few lines of calculation to show that \textsc{CKR-Pivot} has an approximation guarantee of $\bigO(L\log n)$ on general signed graphs. This matches the result of \cite{miyauchi2026multilayer} who showed a similar guarantee by using a region growing algorithm, but it required a significantly longer proof.

\paragraph{Acknowledgments}
The author used ChatGPT 5.6 Sol/Plus for finding the charging scheme in Section 3.3.

\bibliographystyle{alpha}
\bibliography{references}

\end{document}